
\documentclass[conference]{IEEEtran}
%

\usepackage{amsthm}
\newtheorem{theorem}{Theorem}[section]

\newtheorem{lemma}[theorem]{Lemma}
\usepackage{color}


%

%
\usepackage{cite}

\usepackage{graphicx}
%
\ifCLASSINFOpdf
\else
\fi
%
%

%
\usepackage{amsmath}

\usepackage{stfloats}
%

%
\usepackage{url}


\hyphenation{op-tical net-works semi-conduc-tor}

\begin{document}
%
\title{A Graph-theoretic Model to Steganography on Social Networks}

\author{\IEEEauthorblockN{Hanzhou Wu$^1$, Wei Wang$^1$, Jing Dong$^1$, Yiliang Xiong$^2$ and Hongxia Wang$^3$}
\IEEEauthorblockA{$^1$Institute of Automation, Chinese Academy of Sciences, Beijing 100190, China\\
$^2$Microsoft (China) Co., Ltd., Suzhou 215123, China\\
$^3$School of Inf. Science and Technology, Southwest Jiaotong University, Chengdu 611756, China\\
Email: h.wu.phd@ieee.org}}


%


\maketitle

\begin{abstract}
Steganography aims to conceal the very fact that the communication takes place, by embedding a message into a digit object such as image without introducing noticeable artifacts. A number of steganographic systems have been developed in past years, most of which, however, are confined to the laboratory conditions where the real-world use of steganography are rarely concerned. In this paper, we introduce an alternative perspective to steganography. A graph-theoretic model to steganography on social networks is presented to analyze real-world steganographic scenarios. In the graph, steganographic participants are corresponding to the vertices with meaningless unique identifiers. Each edge allows the two vertices to communicate with each other by any steganographic algorithm. Meanwhile, the edges are associated with weights to quantize the corresponding communication risk (or say cost). The optimization task is to minimize the overall risk, which is modeled as additive over the social network. We analyze different scenarios on a social network, and provide the suited solutions to the corresponding optimization tasks. We prove that a multiplicative probabilistic graph is equivalent to an additive weighted graph. From the viewpoint of an attacker, he may hope to detect suspicious communication channels, the data encoder(s) and the data decoder(s). We present limited detection analysis to steganographic communication on a network.
\end{abstract}

\begin{IEEEkeywords}
Steganography, steganalysis, graph, social networks, social media, shortest path, spanning tree.
\end{IEEEkeywords}

%
\IEEEpeerreviewmaketitle

\section{Introduction}
Any communication system that aims to convey a message to the intentional receiver while conceal the very fact that the communication takes place can be classified as steganography \cite{kodovsky:thesis}. A most important yet still challenging requirement for a secure steganographic system is that it should be impossible for an eavesdropper to distinguish between ordinary objects and objects that contain hidden information \cite{jessica:book}. Basically, both steganography and cryptography provide secret communication. Cryptography, however, often exposes clear marks on the encrypted data for an eavesdropper to trace down, whereas steganography even conceals the presence of communication.

Steganography works by hiding a message within a digital object \cite{cox:book}. The resulting stego object containing the hidden information has no noticeable perceptual difference to the original host. It will be sent to the receiver via some insecure public channel such as the Internet. The goal of the decoder is to reconstruct the hidden information for subsequent purpose. Therefore, a secure steganographic system always requires us to develop such embedding and extraction algorithms that the stego object should not cause any suspicion \cite{jessica:book} and the hidden information can be reliably recovered at the decoder side.

A number of advanced steganographic systems such as \cite{holub:wow, holub:suniward, guo:uedr, li:tifs, tang:gan} have been developed to achieve the secret communication. The existing works have moved the underlying research of steganography ahead. However, with the rapidly development of information technologies, especially for social medias such as Facebook, Twitter and Instagram etc., there still has a long way to large-scale practice for steganography. A reasonable explanation for why steganography has not become common use like cryptography in social network services may be the lack of an adequate and real-world benchmark, which leads us to often perform the steganographic experiments on synthetic data through a local (or offline) manner. It would be quite desirable to study the social behaviors, protocols and any other possible scenarios of steganography. In this sense, we may not care about the details of the used steganographic algorithms, but rather quantize the concerned characteristics of steganography as analyzable scalars for possible optimization, which may provide us the access to move steganography from laboratory into real-world.

Wikipedia\footnote{Online available: \url{https://en.wikipedia.org/wiki/Social_network}} defines the social network as a social structure consisting of a set of social actors such as individuals and organizations, sets of dyadic ties, and social interactions between actors. The social network perspective provides lots of methods for analyzing the structure of entire social entities and theories explaining the patterns observed in these structures. The study of structures uses social network analysis, especially graph theory, to identify local or global patterns, locate influential entities, examine network dynamics, and address many optimization tasks. Steganography is essentially a communication task. It is straightforward to model steganographic activities on a social network. In the network, the vertices correspond to the people such as the data encoders and decoders or other social entities such as information routing devices. The edges represent the social links \cite{www} between them, which allow the vertices to communicate with each other by any steganographic means.

We present a simple graph-theoretic model to steganography on social networks in this paper. We will not pay attention to the design of a steganographic scheme that relies on some specified cover, but rather model it as a general communication flow on a network. The goal is to reliably convey messages from the encoders to the decoders with the lowest communication risk (or say cost) via the edges corresponding to the insecure channels. The edges are associated with weights that reflect the corresponding communication risks. A successful steganographic communication between any two vertices will correspond to a network path with a low risk. We model the overall risk as additive over the social network, which enables us to analyze different steganography-based communication scenarios and deal with different optimization tasks. From the viewpoint of an attacker, he may detect suspicious communication channels and even the data encoder(s) and the data decoder(s). We therefore further analyze the detection for steganographic network.

The rest of this paper are organized as follows. We formulate our problem in Section II. In Section III, We analyze different steganographic communication scenarios on social networks and provide reliable solutions. A probabilistic perspective has also been investigated. In Section IV, we present approaches to detecting steganographic communication. Finally, we conclude this paper in Section V.

\section{Problem Formulation}
A social network corresponds to a graph structure $G(V, E)$, where $V = \{v_1, v_2, ..., v_n\}$ denotes the set of vertices and $E = \{e_1, e_2, ..., e_m\}$ represents the set of edges. Here, every edge $e_k\in E$ corresponds to a vertex-pair, namely $e_k = (v_i,v_j)$ for some $1\le i\neq j\le n$. An undirected graph is a graph in which edges have no orientation. This indicates that, $(v_i,v_j)$ is equivalent to $(v_j,v_i)$. In this paper, we are to study undirected graph. A path (if any) between two vertices $v_i$ and $v_j$ is corresponding to such a vertex sequence $(v_{q_1}, v_{q_2}, ..., v_{q_t})$ that $v_{q_1} = v_i$, $v_{q_t} = v_j$ and $(v_{q_{k-1}}, v_{q_k})\in E$ for all $2\leq k\leq t$.

We model steganography on a social network $G(V, E)$. The vertices correspond to steganographic participants such as the data encoders, decoders, even the potential attackers, and other social entities such as servers, information routing devices. The edges show the steganographic communication links between vertices. It allows a message to be conveyed along the edges. Each $e_i \in E$ ($1\leq i\leq m$) is associated with a weight $w_i> 0$ to reveal the steganographic communication risk, which may involve the possibility of being attacked, the cost of bandwidth resource and other possible expenses. A path (if any) between $v_i$ and $v_j$ ($1\leq i\neq j\leq n$) implies that, $v_i$ can share a message with $v_j$ along the path with the risk that depends on the assigned weights of the edges belonging to the path.

Mathematically, we use $S = \{s_1, s_2,$ $..., s_{n_1}\}$ $\subset V$ and $T = \{t_1, t_2,$ $..., t_{n_2}\}$ $\subset V$ to denote the encoder set and decoder set. For each $s_i \in S$, we define its individual decoder set as $T_i\subset T$, meaning that, $s_i$ hopes to send a message to each of $T_i$. Here, $\cup_{i=1}^{n_1}T_i = T$. Let $P(v_i, v_j)$ be a set including all paths between $v_i$ and $v_j$. Each path in $P(v_i, v_j)$ is corresponding to a subset of $E$. Let $P(v_i, v_j)$ = $\{L_1(v_i, v_j),$ $L_2(v_i, v_j),$ ..., $L_{|P(v_i, v_j)|}(v_i, v_j)\}$. Here, $|*|$ means the size of a set. For compactness, we will sometimes consider $L_k(v_i, v_j)\in P(v_i, v_j)$ as a set including all involved edges, i.e., $L_k(v_i, v_j)\subset E$.

We assume that $G(V, E)$ is connected, namely, there always exists at least one path between any two vertices. Otherwise, we should separately analyze all connected subgraphs since two vertices belonging to two different connected components will be never able to send messages to each other. The optimization task is to find such a subset of $E$ that it enables all $s_i\in S$ to send a message to all decoders in $T_i$, and the overall risk can be minimized. Let $R(S, T, E_\textrm{usable})$ denote the overall risk, where $E_\textrm{usable}\subset E$ is the used set for steganographic communication, the optimization problem is then formulated as:
\begin{equation}
E_{\textrm{opt}}(S, T) = \underset{E_{\textrm{usable}}\subset E}{\textrm{arg min}}~~~R(S, T, E_\textrm{usable}).
\end{equation}

\section{Models and Solutions}
A path (if any) between two vertices $v_i$ and $v_j$ corresponds to a candidate communication chain over the network. A path with the minimum communication risk will be the preferred choice for the two vertices. We define the communication risk over a path as additive, meaning that, for arbitrary $L_k(v_i, v_j)\in P(v_i, v_j)$, $1\leq k\leq |P(v_i, v_j)|$, the communication risk is:
\begin{equation}
R(\{v_i\}, \{v_j\}, L_k(v_i, v_j)) = \sum_{e_i\in L_k(v_i, v_j)} w_i.
\end{equation}

The assigned weight $w_i$ for $e_i\in E$ reflects the possibility of being attacked, bandwidth cost and other quantifiable/abstract characteristics. A larger weight indicates that, the corresponding channel is less secure. The above additive assumption is reasonable since a minimized sum is intuitively corresponding to a lowest communication cost, even though the ground truth may be not exactly additive. In addition, the additive assumption could make the optimization target more clear and amenable to mathematical analysis or empirical study.

\subsection{Peer-to-Peer Steganographic Communication}
The simplest model is the peer-to-peer (P2P) steganographic communication problem, where $S = \{s_1\}$ and $T = T_1$ = $\{t_1\}$. In case that $w_i = w_j$ for all $1\leq i\neq j\leq m$, we have:
\begin{equation}
\begin{split}
E_{\textrm{opt}}(S,T) &= \underset{L_k(s_1,t_1)\in P(s_1,t_1)}{\textrm{arg min}}~R(\{s_1\}, \{t_1\}, L_k(s_1,t_1))\\
&= \underset{L_k(s_1,t_1)\in P(s_1,t_1)}{\textrm{arg min}}~\sum_{e_i\in L_k(s_1, t_1)} w_i \\
&= \underset{L_k(s_1,t_1)\in P(s_1,t_1)}{\textrm{arg min}}~|L_k(s_1, t_1)|,
\end{split}
\end{equation}
which requires us to find a path with the least number of edges. It is straightforward to use the basic graph-traversal technique called breadth-first search (BFS) \cite{algo:2009} to quickly identify the optimal path with a computational complexity of $O(|V|+|E|)$. BFS can be used for the design of steganography \cite{wu:isbast, wu:mtap}.

In case that there exists some $w_i\neq w_j$, it requires us to find the shortest path from $s_1$ to $t_1$. This can be addressed by Dijkstra's algorithm \cite{algo:2009}, which can be implemented with a computational complexity of $O(|E|+|V|\textrm{log}_2|V|)$ by using a priority queue data structure. It is noted that, the weights should be all non-negative here. Actually, one can also use other shortest path algorithms such as Floyd-Warshall algorithm \cite{algo:2009}, which allows the weights to be negative.

There may exist multiple paths with the minimum communication risk. In this case, it would be desirable to accept such a path that the number of vertices is minimum, which can be addressed by using a dynamic programming technique after calling the Dijkstra's algorithm. Let $f(v_i, v_j)$ be the number of vertices in such an optimal path (between $v_i$ and $v_j$) that it has the least number of vertices. Accordingly,
\begin{equation}
f(v_i, v_j) = \textrm{min}~\{f(v_i, v_k) + f(v_k, v_j) - 1~|~v_k\in V\}.
\end{equation}

Therefore, for the P2P steganographic communication network, we can determine the optimal communication chain out with a low computational complexity. Notice that, if there has no path between $v_i$ and $v_j$, no communication will be available between them. In steganography, though there has no explicit function between distortion and security, a lower embedding distortion often corresponds to a higher security level. If the distortion can be well modeled on a graph, we may use shortest path algorithms to achieve optimal or near-optimal embedding strategy such as Syndrome-Trellis Codes \cite{jessica:stcs}.

\subsection{Multi-Point Steganographic Communication}
It is more common that $|S| > 1$ and $|T| > 1$ in practice. Suppose that $S = \{s_1\}$ and $T = T_1 = \{t_1, t_2, ..., t_g\}$, an intuitive idea of finding the feasible communication strategy is to determine all shortest paths between $s_1$ and each one of $\{t_1, t_2, ..., t_g\}$. Let $E_\textrm{opt}(s_1, t_i)$ represent an optimal communication path between $s_1$ and $t_i$. If $E_\textrm{opt}(s_1, t_i)\cap$$E_\textrm{opt}(s_1, t_j) = \emptyset$ for all $i\neq j$, one may use $\cup_{i=1}^{g}E_\textrm{opt}(s_1, t_i)$ as the final strategy. It may be not optimal from risk optimization view since there may exist path redundancy. E.g., in Fig. 1 (a), $S = \{v_1\}$, $T = \{v_5, v_6\}$, $E_\textrm{opt}(v_1, v_5)$ $= \{(v_1, v_3), (v_3, v_4), (v_4, v_5)\}$ and $E_\textrm{opt}(v_1, v_6)$ $= \{(v_1, v_2), (v_2, v_4),$ $(v_4, v_6)\}$. If the encoder $v_1$ uses $E_\textrm{opt}(v_1, v_5)\cup E_\textrm{opt}(v_1, v_6)$ as the communication strategy, when he wishes to share the same message to both $v_5$ and $v_6$, he needs to send the message two times: one is conveyed by $v_1\rightarrow v_3\rightarrow  v_4\rightarrow v_5$ and the other is $v_1\rightarrow v_2\rightarrow  v_4\rightarrow v_6$. It is straightforward to prove that $w_1+w_2 = w_5+w_6$, meaning that, $\{(v_1, v_3), (v_3, v_4),$ $(v_4, v_6)\}$ and $\{(v_1, v_2), (v_2, v_4),$ $(v_4, v_5)\}$ are also the optimal paths for $v_6$ and $v_5$, respectively. It implies that, $v_1$ can send the message only once along $v_1\rightarrow v_2\rightarrow v_4$ and $v_4$ can distribute it to $v_5$ and $v_6$ (though $v_4$ cannot retrieve the embedded message). It is seen that, the overall risk is intuitively reduced as shown in Fig. 1 (b).

In case that $E_\textrm{opt}(s_1, t_i)\cap E_\textrm{opt}(s_1, t_j) \neq \emptyset$ for some $i\neq j$, $\cup_{i=1}^{g}E_\textrm{opt}(s_1, t_i)$ is also not optimal for the final strategy. That is, the edge(s) belonging to (at least) two paths will be used for transmission multiple times. It is reasonable since an encoder may send different messages along different paths, which may require the encoder to use the same edge multiple times. From the risk optimization view, the weights assigned to the edges are actually fusion results. It implies that, the impact of the used times of any edge has been (roughly) quantized by its weight. Thus, we may not care about the used times of edges directly. Accordingly, with the aforementioned additive assumption over a single path, the overall risk for multi-point communication is defined as:
\begin{equation}
\begin{split}
E_{\textrm{opt}}(S, T) &= \underset{E_{\textrm{usable}}\subset E}{\textrm{arg min}}~R(S, T, E_\textrm{usable})\\
&= \underset{E_{\textrm{usable}}\subset E}{\textrm{arg min}}~\sum_{e_i\in E_{\textrm{usable}}}w_i.
\end{split}
\end{equation}

Let $U_i = \{s_i\}\cup T_i$ for all $1\leq i\leq n_1$. We assume that, for all possible $U_i$, there should always exist such $j\neq i$ that $U_i\cap U_j\neq \emptyset$. The reason is that, for some $U_i$, if for all $j\neq i$, we have $U_i\cap U_j= \emptyset$. Then, we can split $(S, T)$ into two multi-point steganographic communication subtasks, i.e., $(S-\{s_i\}, T-T_i)$ and $(\{s_i\}, T_i)$.
{\color{red}{Therefore, it can be inferred that, under the above assumption, $U_1, U_2, ..., U_{n_1}$ can be merged into multiple disjoint vertex sets, denoted by $U_1^{(1)}, U_2^{(1)}, ..., U_{n_1^{(1)}}^{(1)}$ where $n_1^{(1)} < n_1$. By recursively applying the above assumption\footnote{{\color{red}{One can also assume that for all $2\leq i\leq n_1$, there should always exist such $j< i$ that $U_i\cap U_j\neq \emptyset$. Both ensure that, we do not need to split $(S, T)$ into multiple subtasks.}}} to $U_1^{(1)}, U_2^{(1)}, ..., U_{n_1^{(1)}}^{(1)}$ until there has only one set, we then write:}}
\begin{lemma}
$E_\textrm{opt}(S,T)$ will form a tree structure.
\end{lemma}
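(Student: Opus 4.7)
The plan is to establish two facts about $E_\textrm{opt}(S,T)$, namely \emph{acyclicity} and \emph{connectivity}, whose conjunction is exactly the tree property.

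First I would handle acyclicity by a standard exchange argument that exploits the strict positivity $w_i>0$. Suppose for contradiction that the subgraph induced by $E_\textrm{opt}$ contains a cycle $C$, and pick any edge $e\in C$. For every required pair $(s_i,t)$ with $t\in T_i$, any path between them in $E_\textrm{opt}$ that happens to traverse $e$ can be rerouted through the remaining edges of $C$, so $E_\textrm{opt}\setminus\{e\}$ still satisfies every communication demand; since $w_e>0$, it has strictly smaller total risk, contradicting optimality. Hence $E_\textrm{opt}$ is a forest.

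Next I would show connectivity. The same cost-reduction reasoning forces every $e\in E_\textrm{opt}$ to be \emph{necessary}: its removal must break at least one required $(s_i,t)$ path, for otherwise $e$ could be deleted without loss of feasibility but with strictly smaller cost. In particular, $E_\textrm{opt}$ has no connected component disjoint from $\bigcup_i U_i$, so it suffices to verify that $\bigcup_i U_i$ lies in a single component. Here I would invoke the recursive merging hypothesis from the paragraph preceding the lemma, formalized as ``the hypergraph on $V$ with hyperedges $\{U_1,\dots,U_{n_1}\}$ is connected''. For any two vertices $u,v\in\bigcup_i U_i$, this yields a chain $U_{i_1},\dots,U_{i_p}$ of pairwise intersecting sets joining the $U$-set containing $u$ to the one containing $v$; within each $U_{i_j}$ the encoder $s_{i_j}$ is connected to every member of $U_{i_j}$ by the feasibility of $E_\textrm{opt}$, so concatenating the intra-set paths through successive intersection vertices produces a walk from $u$ to $v$ entirely inside $E_\textrm{opt}$.

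Combining the two properties, $E_\textrm{opt}(S,T)$ is a connected acyclic subgraph, hence a tree spanning $\bigcup_i U_i$ together with whatever Steiner vertices appear as internal relays (recall Fig.~1(b), where $v_4$ plays exactly such a role). The step I anticipate to be the main obstacle is the second one: the recursive merging hypothesis is phrased rather informally in the excerpt, and turning it into a clean induction on $n_1$ that rigorously justifies the single-component conclusion is the delicate part of the argument; the acyclicity step, by contrast, is a routine application of the positive-weight exchange trick.
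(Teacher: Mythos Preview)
Your proposal is correct and follows essentially the same two-step route as the paper: establish connectivity of the subgraph induced by $E_{\textrm{opt}}(S,T)$ and then rule out cycles via the standard exchange argument (delete an edge of the cycle, feasibility is preserved, total weight drops). The only differences are cosmetic: the paper removes ``the edge with the largest weight'' from the cycle whereas you remove an arbitrary edge (either works since all $w_i>0$), and the paper dismisses connectivity as ``straightforward'' while you actually spell out the chain-of-intersecting-$U_i$ argument that the preceding merging hypothesis is meant to supply---so if anything your treatment of the step you flagged as delicate is more complete than the paper's own.
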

\begin{proof}
It is straightforward to prove that the resulting subgraph due to $E_\textrm{opt}(S,T)$ is connected. We have to prove that, there has no circle in the subgraph. Suppose that, there has a circle in the subgraph. One can always remove the edge with the largest weight from the circle such that the new subgraph will be still connected and the overall risk will not be worse. Therefore, $E_\textrm{opt}(S,T)$ forms a tree structure
\footnote{{\color{red}{Note that, when the proposed assumption is not required, $E_\textrm{opt}(S,T)$ will form a forest structure. A tree is a special case of a forest. Unless mentioned, in default, we think the proposed assumption holds.}}}.
\end{proof}
We begin with a simplified problem, where $V = S\cup T$. The optimal solution is equivalent to finding minimum spanning tree (MST), which can be solved by Kruskal's algorithm or Prim's algorithm \cite{algo:2009}. It is mentioned that, the greedy strategy for computing MST has been exploited in the design of steganography such as optimal parity assignment \cite{jessica:opa}.
\begin{figure}[!t]
\centering
\includegraphics[width=3.5in]{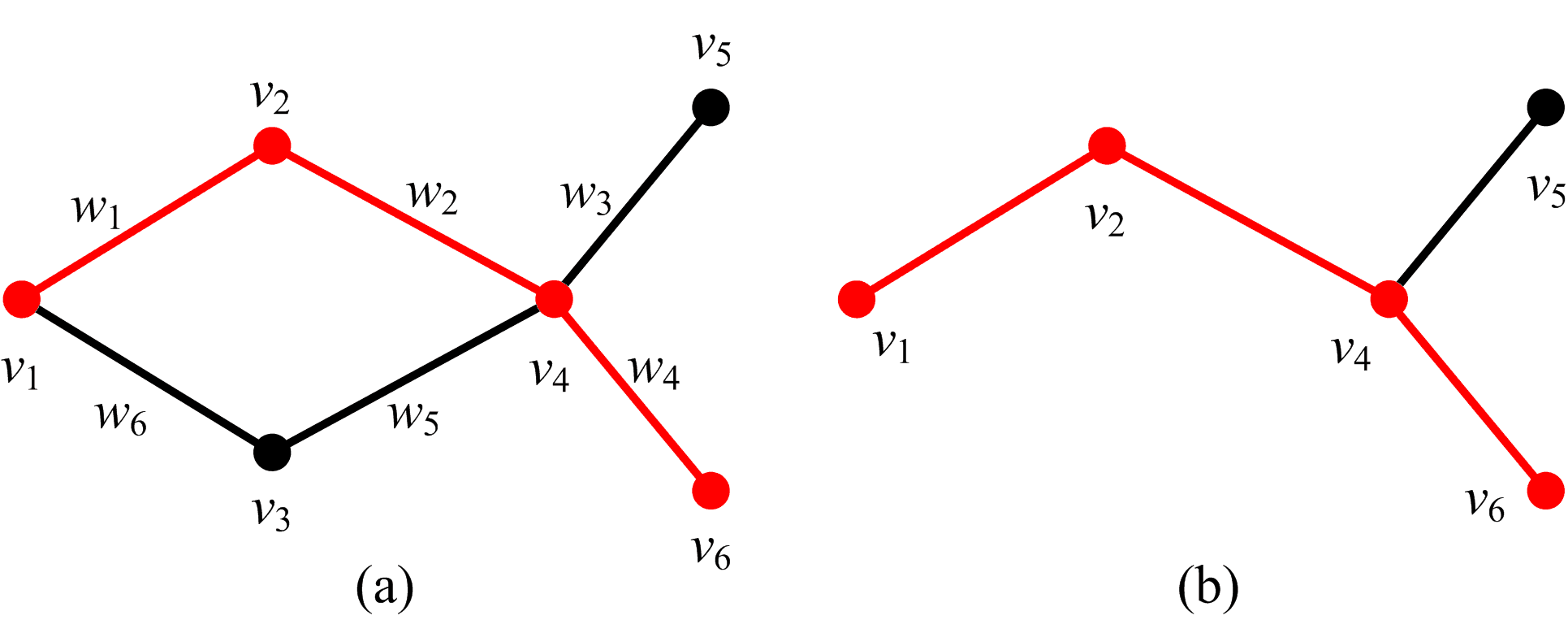}
\caption{An example of multi-point steganographic communication.}
\end{figure}

Actually, the original problem to be optimized is essentially corresponding to such a graph problem: given an undirected graph $G(V, E)$ without negative edge-weights and a subset of vertices
$S\cup T$ called terminals, the goal is to find such a tree of minimum weight-sum that it contains all the terminals (but may include additional vertices). This problem is typically called as Steiner Tree Problem (STP), which is NP-hard \cite{chlebik:stp}. The STP is seen as a generalization of two other combinatorial optimization problems mentioned above, i.e., the shortest path problem and the MST problem. If the STP deals with only two terminals, it reduces to finding a shortest path. If all vertices are terminals, it is equivalent to the MST problem. Therefore, one may employ the existing approximation algorithms \cite{chlebik:stp, berman:stp, byrka:lp4stp} designed for the STP to find the suitable strategy for multi-point steganographic communication.

\subsection{Probabilistic Graph Perspective}
We have used a weight to expose the communication risk of a channel. It enables us to deal with different steganographic scenarios by using the shortest path algorithm, MST algorithm, or approximation algorithms designed for the STP. Considering real-world applications, it would be more intuitive to use statistical information to characterize the reliability of a channel. The reason is, in practice, the raw data we may obtain is more like a series of successful or failure records, which may not only rely on some specified steganographic algorithm.

By analyzing statistical characteristics of a series of records, one may roughly estimate the prior probability for the communication optimization. There are at least two advantages. First, prior probability does not only rely on some specified steganographic algorithm. Second, probabilistic analysis will be more robust. Suppose that, the steganographic social network are modified with probabilistic edges, meaning that, the edges are associated with a probability, rather than a weight, denoting the successful probability of steganographic communication, which is derived from the real-world records (even though we have no real data at the moment). Accordingly, the reliability of a path $L_k(v_i, v_j)\in P(v_i, v_j)$ between $v_i$ and $v_j$ can be assumed as the product of probability values along the path:
\begin{equation}
J(\{v_i\}, \{v_j\}, L_k(v_i, v_j)) = \prod_{e_i\in L_k(v_i, v_j)} p_i,
\end{equation}
where $p_i$ denotes the assigned probability of $e_i$. The multiplicative assumption is reasonable in practice since two edges corresponding to two different channels could be approximately independent. Similarly, for $G(V, E)$, we further specify:
\begin{equation}
\begin{split}
E_{\textrm{opt}}(S, T) &= \underset{E_{\textrm{usable}}\subset E}{\textrm{arg max}}~J(S, T, E_\textrm{usable})\\
&= \underset{E_{\textrm{usable}}\subset E}{\textrm{arg max}}~\prod_{e_i\in E_{\textrm{usable}}}p_i,
\end{split}
\end{equation}
which requires us to find a set of edges such that the overall reliability (multiplicative) is maximum. We will prove that:
\begin{lemma}
$E_\textrm{opt}(S,T)$ will also form a tree structure, and the solution is equivalent to solving STP on a weighted graph.
\end{lemma}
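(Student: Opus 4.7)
The plan is to reduce the multiplicative optimization to the additive one already handled by Lemma~1 via a logarithmic change of variables. Since each $p_i \in (0,1]$ and the objective in Eq.~(7) is to maximize $\prod_{e_i \in E_\textrm{usable}} p_i$, I would apply the monotone decreasing function $-\log(\cdot)$ and observe that maximizing the product is equivalent to minimizing $\sum_{e_i \in E_\textrm{usable}} (-\log p_i)$. If some $p_i$ equals zero, that edge is effectively unusable and may be pruned from $E$ before the reduction, so the remaining probabilities lie in $(0,1]$ and the logarithm is finite.

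Next, I would define derived non-negative edge weights $w_i' := -\log p_i \ge 0$ for every $e_i \in E$. The resulting instance is precisely an additive weighted graph of the form studied in Section III.B, with optimization objective
\[
E_\textrm{opt}(S,T) = \underset{E_\textrm{usable}\subset E}{\textrm{arg min}}~\sum_{e_i\in E_\textrm{usable}} w_i',
\]
matching Eq.~(5). Because the reduction is a bijection on edge subsets and preserves the ordering of objective values, $E_\textrm{opt}(S,T)$ for the probabilistic problem coincides with $E_\textrm{opt}(S,T)$ for the derived weighted problem.

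Applying Lemma~1 to this derived graph then yields the tree structure directly: any cycle in the optimal subgraph would allow the deletion of its maximum-weight edge without disconnecting $S \cup T$ and without increasing $\sum w_i'$, which under the log transform corresponds to not decreasing $\prod p_i$. Moreover, the derived problem is by construction the STP instance on $G(V,E)$ equipped with weights $w_i'$, since we seek a minimum-weight connected subgraph containing all terminals in $S \cup T$. Both claims of the lemma therefore follow at once.

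The main obstacle, such as it is, lies in handling edge cases cleanly — excluding $p_i = 0$ and verifying that the assumption used to derive Lemma~1 (namely, the connectivity requirement linking the $U_i$'s) transfers intact under the log transform — rather than in the structural argument itself. The remainder of the proof reduces to invoking Lemma~1 and the STP formulation already discussed in Section III.B.
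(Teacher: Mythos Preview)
Your proposal is correct and follows essentially the same route as the paper: apply the monotone transformation $w_i = -\log p_i$ (the paper uses $\log_2\frac{1}{p_i}$) to convert the multiplicative maximization into the additive minimization of Eq.~(5), then invoke the Section~III.B analysis to obtain the tree structure and the STP equivalence. Your handling of the $p_i=0$ edge case and explicit appeal to Lemma~III.1 are more careful than the paper's own argument, but the underlying idea is identical.
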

\begin{proof}
We write:
\begin{equation}
\begin{split}
E_{\textrm{opt}}(S, T) &= \underset{E_{\textrm{usable}}\subset E}{\textrm{arg max}}~\prod_{e_i\in E_{\textrm{usable}}}p_i\\
&= \underset{E_{\textrm{usable}}\subset E}{\textrm{arg min}}~-\textrm{log}_2\prod_{e_i\in E_{\textrm{usable}}}p_i\\
&= \underset{E_{\textrm{usable}}\subset E}{\textrm{arg min}}~\sum_{e_i\in E_{\textrm{usable}}}\textrm{log}_2\frac{1}{p_i}.\\
\end{split}
\end{equation}

By assigning $w_i = \textrm{log}_2\frac{1}{p_i}$, we can translate the probabilistic graph into a weighted graph so that our optimization task is exactly equivalent to solving STP in the weighted graph \footnote{{\color{red}{Here, the aforementioned assumption for Lemma III.1 should hold. Otherwise, it would form a forest, which is not corresponding to the STP.}}}.
\end{proof}

A multiplicative probabilistic graph is therefore corresponding to an additive weighted graph. The optimization solutions are similar to each other.

\section{Detection Analysis}
An attacker may serve as a network manager (or monitor), who has the surveillance right. He hopes to detect whether the steganographic communication exist or not over the network. A straightforward idea is to use the existing advanced steganalysis algorithms \cite{jessica:srm, xu:sdcnn, xu:ensemble, xu:cnnJUNIWARD, ye:cnn}, which mainly rely on sophisticated manual feature design, conventional classifiers and deep neural networks. Accordingly, the detection is equivalent to designing reliable steganalysis systems. In other words, to resist against conventional steganalysis algorithms over a social network, we should use statistical undetectable steganographic systems or other efficient strategies, e.g., batch steganography \cite{ker:batch, ker:move}. We will not study it in this paper since there are a lot of related works that have been reported.

From the viewpoint of social network, we here consider two attacks, i.e, structural attack and statistical attack. The former analyzes the structural characteristics of a graph without considering external effects, which allows us to easily analyze problems in an offline manner. The latter will take into account the communication activities in the network. It would be more useful than the structural analysis.

\subsection{Structural Attack}
Structural attack relies on the network structure. It is probably not suitable for detecting the data encoder(s) and decoder(s) ``\emph{directly}'' sometimes, but to identify the channels denoted by edges. We define $\delta(e)$ as the number of such pair $(v_i, v_j)$ that $e\in E$ belongs to at least one shortest path between $v_i$ and $v_j$, i.e.,
\begin{equation}
\delta(e) = |\{(v_i, v_j) | 1\leq i < j\leq n, e\in P_\textrm{s}(v_i, v_j)\} |,
\end{equation}
where $P_\textrm{s}(v_i, v_j)$ is the set including all shortest paths between $v_i$ and $v_j$. We define the \emph{path-support rate (PSR)} of $e$ as:
\begin{equation}
r(e) = \frac{\delta(e)}{\sum_{e'\in E}\delta(e')}.
\end{equation}

A higher PSR indicates that it is intuitively more likely to be exploited for steganographic communication. For detection, more attention should be paid to edges with a high PSR. We conduct experiments with \emph{virtual data} to qualitatively describe the detection performance. Suppose that, an attacker selects a subset of $E$, denoted by $E_\textrm{attack}$, according to PSR. Each edge in $E_\textrm{attack}$ corresponds to a communication channel. Traditional steganalysis algorithms will be applied to the detection of these channels. We define the \emph{edge-selection rate (ESR)} as the size-ratio between $E_\textrm{attack}$ and $E$, i.e., $|E_\textrm{attack}|/|E|$. To determine $E_\textrm{attack}$, the PSR values for all edges are sorted in an increasing order, and the top-$|E_\textrm{attack}|$ edges constitute $E_\textrm{attack}$. We simulate P2P steganographic communication on a network by \emph{randomly} choosing two different vertices and determining a shortest path between them (by BFS). In a path, if there is at least one edge belonging to $E_\textrm{attack}$, we say the communication path will be at the risk of being detected by such as steganalysis algorithms.

For a random network, we randomly generate a number of vertex-pairs and compute their shortest paths. We define the \emph{path-hit rate (PHR)} as the ratio between the number of such shortest paths that they are at risk of being detected and the number of tested vertex-pairs. We use PHR as the detection metric. Fig. 2 shows the results for different random networks. Due to the lack of real data, we cannot guarantee that Fig. 2 characterizes the real-world scenarios accurately. However, we can still draw the potential rule of random data, which may be helpful for future study. That is, for a fixed number of vertices, a larger number of edges results in a worse detection performance. When the number of edges approaches to the upper-bound, the detection curve will be rather close to $y = x$, which means \emph{random guessing}. It is seen that, with low ESR values (which may be often used in applications), the detection performance is satisfied (acceptable) to the steganographer(s). However, we admit that it may be not significant since the real steganographic network may not follow random structure.

The structural attack is largely affected and reflected by the topological structure of the network, which involves degrees, connectedness, path planning and relationship of vertices. Any efficient algorithms focusing on identifying critical or unusual edges and vertices are likely suited to steganographic network.

\subsection{Statistical Attack}
The structural attack may result in a high missing detection rate since it does not take into account the real data. For data-driven statistical attack, it allows an attacker to mine unusual information from real records, which is more robust. We here present limited discussion for the statistical attack.

\emph{1) Data-driven Structural Analysis:} This is similar to the above structural attack. A straightforward idea is to construct a subnetwork according to the real-data flow. Then, structural analysis may be used to identify the data encoder(s) and data decoder(s). Notice that, the subnetwork may contain multiple connected components. Comparing with the above structural attack, data-driven structural analysis takes into account the real communication activities and features. It can reduce the impact of vertices that never communicate with others.

\emph{2) Data-driven Statistical Analysis:} The explicit statistical characteristics (or distribution) of steganographic communication is unknown to us. We expect to exploit the communication activities to identify the anomalies. We divide the data-driven statistical analysis into three categories roughly, i.e., behavior analysis, content analysis, unsupervised learning analysis.

Content-based detection requires us to construct the differentiable statistical information from any form of the concerned objects, for which it is common to use efficient classifiers such as neural networks and SVM or other boosting strategies, e.g., ensemble, to distinguish ``normal'' and ``abnormal'' from the extracted feature vectors or feature maps. The traditional steganalysis algorithms can be therefore easily classified as a core approach to content analysis. Unlike traditional steganalysis methods, general content analysis on a network may involve  a set of vertices or edges.

The behavior analysis is not intuitive in the steganographic network. A common framework in steganography is to embed secret data into a specified cover. The stego object is then sent to a receiver, for whom the embedded data can be retrieved. We call it as cover-based steganography. Due to the widely use of social medias, one may also apply behavior steganography \cite{zhang:behavior} (a kind of coverless steganography) for communication. It is necessary to identify those suspicious vertices from unusual behavior observations. The social behavior can be quantized as analyzable parameters associated to network vertices or edges. Also, they may be used to construct a new behavior network for analysis. Feature design would be critical for both cases.
\begin{figure}[!t]
\centering
\includegraphics[width=3.2in]{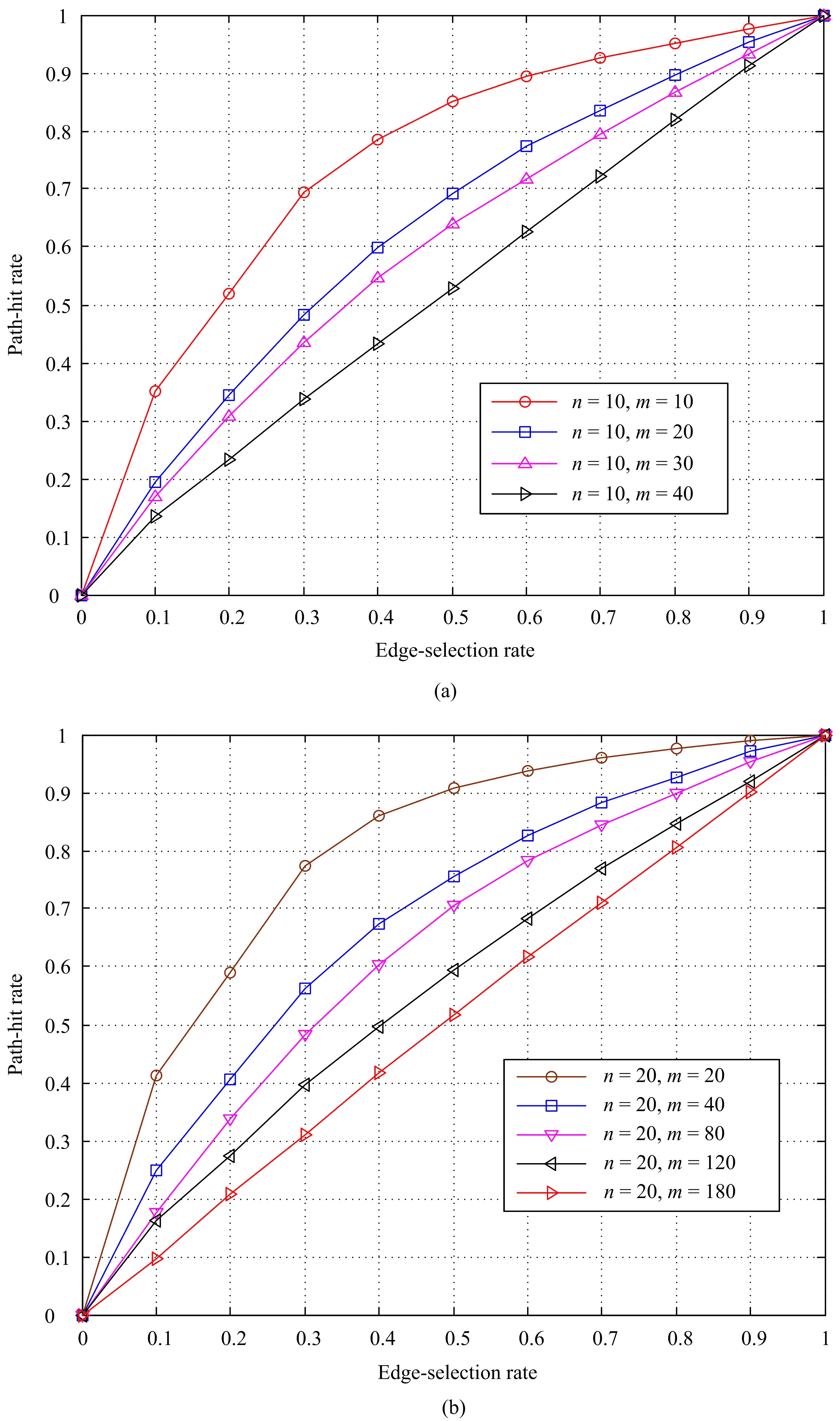}
\caption{The detection performance with structural analysis for P2P steganographic communication on \emph{randomly generated networks}.}
\end{figure}

The content analysis and the behavior analysis often require supervised learning. On the one hand, it is not easy to obtain the real sample labels. On the other hand, the time-varying network may lead to a low detection accuracy. Though the raw structural analysis corresponds to unsupervised learning, it focuses on the topological characteristics of the network and is independent of real data. There are at least two types for unsupervised learning analysis, i.e., clustering-based and model-based. According to the extracted features from the raw data, clustering-based methods treat the classification task as the clustering problem in data mining. While, the model-based methods may assume that ordinary activities meet some rules and any abnormal behavior will break them down.

The existing anomaly detection approaches and perspectives for social networks probably also work for a steganographic network though they were designed to other social scenarios originally. However, more attention should be paid to the characteristics of the steganographic behavior itself.

\section{Conclusion and Future Works}
A preferred starting point of scientific research is to make a model for the problem \cite{ker:move}. In this paper, we model steganography based communication on a network. The optimization problem is to minimize the overall secret communication risk, which is considered as additive over the network. We have analyzed different scenarios, and provided the optimized solutions. With real-world data, one may estimate the prior information for an insecure channel, which reveals the reliability of channels. By translating the probabilistic network into a weighted network, one could find the optimal (P2P) or near-optimal (multi-point) communication strategy. We also present some discussion from detection view. The future considerations and works include:

\begin{itemize}
\item A single weight or probability would make the problem amenable to mathematical analysis. It may not characterize the steganographic channel very well. Feature vectors may be more desirable.
\item A basic theoretical model could be more robust and less ambiguous. However, the real-world is more messy and complex. It would be important to take into account the attackers and other constraints in the social network.
\item A challenging problem is to build an adequate and real-world benchmark for experiments. A compromised way focusing on synthetic data may be a good choice.
\item Moving steganography from laboratory into real-world is a system engineering. Not only the underlying technologies should be analyzed, but also such as social behaviors should be studied, which may require knowledge about management, psychology, network analysis etc.
\end{itemize}





\begin{thebibliography}{1}

\bibitem{kodovsky:thesis}
J. Kodovsky. Steganalysis of digital images using rich image representations and ensemble classifiers. Dissertation, SUNY Binghamton, 2012.
\bibitem{jessica:book}
J. Fridrich. Steganography in digital media: principles, algorithms, and applications. Cambridge University Press, New York, 2010.
\bibitem{cox:book}
I. Cox, M. Miller, J. Bloom, J. Fridrich and T. Kalker. Digital watermarking and steganography. Morgan Kaufmann Publishers Inc., San Francisco, CA, USA, 2008.
\bibitem{holub:wow}
V. Holub and J. Fridrich, ``Designing steganographic distortion using directional filters,'' In: \emph{Proc. Int. Workshop Inf. Forensics Security}, pp. 234-239, 2012.
\bibitem{holub:suniward}
V. Holub, J. Fridrich and T. Denemark, ``Universal distortion function for steganography in an arbitrary domain,'' \emph{EURASIP J. Inf. Security}, 2014(1): 1-13, 2014.
\bibitem{guo:uedr}
L. Guo, J. Ni, W. Su, C. Tang and Y. Shi, ``Using statistical image model for JPEG steganography: uniform embedding revisited,'' \emph{IEEE Trans. Inf. Forensics Security}, 10(12): 2669-2680, 2015.
\bibitem{li:tifs}
B. Li, M. Wang, X. Li, S. Tan and J. Huang, ``A strategy of clustering modification directions in spatial image steganography,'' \emph{IEEE Trans. Inf. Forensics Security}, 10(9): 1905-1917, 2015.
\bibitem{tang:gan}
W. Tang, S. Tan, B. Li and J. Huang, ``Automatic steganographic distortion learning using a generative adversarial network,'' \emph{IEEE Signal Process. Lett.}, 24(10): 1547-1551, 2017.
\bibitem{www}
L. Backstrom, C. Dwork and J. Kleinberg, ``Wherefore art thou R3579X? anonymized social networks, hidden patterns, and structural steganography,'' In: \emph{Proc. Int. Conf. World Wide Web}, 54(12): 181-190, 2007.
\bibitem{algo:2009}
T. H. Cormen, C. E. Leiserson, R. L. Rivest and C. Stein, ``Introduction to algorithms,'' \emph{The MIT Press}, Cambridge, 2009.
\bibitem{wu:isbast}
H. Wu and H. Wang, ``Multibit color-mapping steganography using depth-first search,'' In: \emph{Proc. IEEE Int. Symp. Biometrics Security Technol.}, pp. 224-229, 2013.
\bibitem{wu:mtap}
H. Wu, H. Wang, H. Zhao and X. Yu, ``Multi-layer assignment steganography using graph-theoretic approach,'' \emph{Multimed. Tools Appl.}, 74(18): 8171-8196, 2015.
\bibitem{jessica:stcs}
T. Filler, J. Judas and J. Fridrich, ``Minimizing additive distortion in steganography using syndrome-trellis codes,'' \emph{IEEE Trans. Inf. Forensics Security}, 6(3): 920-935, 2011.
\bibitem{jessica:opa}
J. Fridrich and R. Du, ``Secure steganographic methods for palette images,'' In: \emph{Proc. Int. Workshop Inf. Hiding}, vol. 1768, pp. 47-60, 2000.
\bibitem{chlebik:stp}
M. Chlebik and J. Chlebikova, ``The Steiner tree problem on graphs: inapproximability results,'' \emph{Theoret. Comput. Sci.}, 406(3): 207-214, 2008.
\bibitem{berman:stp}
P. Berman, M. Karpinski and A. Zelikovsky, ``1.25-Approximation algorithm for Steiner tree problem with distances 1 and 2,'' In: \emph{Proc. Workshop on Algorithms and Data Structures}, vol. 5664, pp. 86-97, 2009.
\bibitem{byrka:lp4stp}
J. Byrka, F. Grandoni, T. Rothvoss and  L. Sanita, ``An improved LP-based approximation for Steiner tree,'' In: \emph{Proc. 42th ACM Symposium on Theory of Computing}, pp. 583-592, 2010.
\bibitem{jessica:srm}
J. Fridrich and J. Kodovsky, ``Rich models for steganalysis of digital images,'' \emph{IEEE Trans. Inf. Forensics Security}, 7(3): 868-882, 2012.
\bibitem{xu:sdcnn}
G. Xu, H. Wu and Y. Shi, ``Structural design of convolutional neural networks for steganalysis,'' \emph{IEEE Signal Process. Lett.}, 23(5): 708-712, 2016.
\bibitem{xu:ensemble}
G. Xu, H. Wu and Y. Shi, ``Ensemble of CNNs for steganalysis: an empirical study,'' In: \emph{Proc. ACM Int. Workshop Inf. Hiding and Multimed. Security}, pp. 103-107, 2016.
\bibitem{xu:cnnJUNIWARD}
G. Xu, ``Deep convolutional neural network to detect J-UNIWARD,'' In: \emph{Proc. ACM Workshop Inf. Hiding \& Multimed. Security}, pp. 67-73, 2017.
\bibitem{ye:cnn}
J. Ye, J. Ni and Y. Yi, ``Deep learning hierarchical representations for image steganalysis,'' \emph{IEEE Trans. Inf. Forensics Security}, 12(11): 2545-2557, 2017.
\bibitem{ker:batch}
A. Ker, ``Batch steganography and pooled steganalysis,'' In: \emph{Proc. Int. Workshop Inf. Hiding}, vol. 4437, pp. 265-281, 2006.
\bibitem{ker:move}
A. Ker, P. Bas, R. Bohme, R. Cogranne, S. Craver, T. Filler, J. Fridrich and T. Penvy, ``Moving steganography and steganalysis from the laboratory into the real world,'' In: \emph{Proc. ACM Int. Workshop Inf. Hiding Multimed. Security}, pp. 45-58, 2013.
\bibitem{zhang:behavior}
X. Zhang, ``Behavior steganography in social network,'' In: \emph{Proc. Adv. Intelligent Inf. Hiding Multimed. Signal Process.}, pp. 21-23, 2017.

\end{thebibliography}
%

\end{document}